\newtheorem{theorem}{Theorem}
\newtheorem{lemma}{Lemma}
\newcommand{\cardin}[1]{\left| {#1} \right|}
\newcommand{\B}{\mathcal{B}}
\renewcommand{\Re}{\mathbb{R}}
\renewcommand{\bar}[1]{\textstyle\overline{#1}}
\begin{document}

\title{Colorful Strips}
\author{Greg Aloupis
\thanks{Universit\'e Libre de Bruxelles, Brussels, Belgium. {\tt\{galoupis, jcardin, secollet, mkormanc, slanger, ptaslaki\}@ulb.ac.be}. Partially supported by the Communaut\'e fran\c caise de Belgique~-~ARC.} 
\and 
Jean Cardinal\footnotemark[1] 
\and 
S\'ebastien Collette\footnotemark[1]~~\thanks{Charg\'e de Recherches du FRS-FNRS.}
 \and 
Shinji Imahori\thanks{Nagoya University, Nagoya, Japan
{\tt imahori@na.cse.nagoya-u.ac.jp}}
\and 
Matias Korman\footnotemark[1] 
\and
Stefan Langerman\footnotemark[1]~~\thanks{Ma\^itre de Recherches du FRS-FNRS.}
\and
Oded Schwartz\thanks{The Weizmann Institute of Science, Rehovot, Israel, 
{\tt oded.schwartz@weizmann.ac.il}}
\and 
Shakhar Smorodinsky\thanks{Ben-Gurion University, Be'er Sheva, Israel.
{\tt shakhar@math.bgu.ac.il}}
\and 
Perouz Taslakian\footnotemark[1]
}
\maketitle

\begin{abstract}We study the following geometric hypergraph coloring problem:
given a planar point set and an integer $k$, we wish to color the points with $k$ colors so that any axis-aligned strip containing sufficiently many points contains all colors.  We show that if the strip contains at least $2k{-}1$ points, such a coloring can always be found. In dimension $d$, we show that the same holds provided the strip contains at least $k(4\ln k +\ln d)$ points. We also consider the dual problem of coloring a given set of axis-aligned strips so that any sufficiently covered point in the plane is covered by $k$ colors. We show that in dimension $d$ the required coverage is at most $d(k{-}1)+1$. 
This complements recent impossibility results on decomposition of strip coverings with arbitrary orientations.

From the computational point of view, we show that deciding whether a three-dimensional point set can be 2-colored so that any strip containing at least three points contains both colors is NP-complete. This shows a big contrast with the planar case, for which this decision problem is easy.

\end{abstract}

\section{Introduction}

There is a currently renewed interest in coloring problems on {\em geometric} hypergraphs, that is,
set systems defined by geometric objects. This interest is motivated by applications to wireless and
sensor networks~\cite{othersensors}; conflict-free colorings~\cite{shakharcf}, chromatic numbers~\cite{Sm07},
covering decompositions~\cite{pachtoth,Aloup2}, or polychromatic ({\em colorful}) colorings
of geometric hypergraphs~\cite{Aloup1} have been extensively studied in this context.

In this paper, we are interested in $k$-coloring finite point sets in $\Re^d$ so that any region bounded by two parallel axis-aligned hyperplanes, that contains at least
some fixed number of points, also contains a point of each color.

An {\em axis-aligned strip}\footnote{From here on, unless otherwise specified, a {\em strip} is always assumed to be axis-aligned} is the area enclosed between two parallel axis-aligned hyperplanes. A {\em $k$-coloring} of a finite set assigns one of $k$ colors to each element in the set.
Let $S$ be a $k$-colored set of points in $\Re^d$. A strip is said to be {\em polychromatic}
with respect to $S$ if it contains at least one element of each color class.
We define the function $p(k,d)$ as the minimum number for which there always exists
a $k$-coloring of any point set in $\Re^d$ such that every strip containing at least $p(k,d)$ points is polychromatic.
This is a particular case of the general framework proposed by Aloupis,
Cardinal, Collette, Langerman, and Smorodinsky in~\cite{Aloup1}.

Note that the problem does not depend on whether the strips are open or closed, since the problem can be seen in a purely combinatorial fashion: an axis-aligned strip isolates a subsequence of the points in sorted order with respect to one of the axes.  
Therefore, the only thing that matters is the order in which the points appear along each axis.
We can thus rephrase our problem, considering $d$-dimensional points sets, as finding the minimum
value $p(k,d)$ such that the following holds:
For $d$ permutations of a set of items $S$, it is always possible to color the items with $k$ colors, so that in all $d$
permutations every sequence of at least $p(k,d)$ contiguous items contains one item of each color.\medskip

We also study {\em circular} permutations, in which the first and the last elements are contiguous.
We consider the problem of finding a minimum value $p'(k,d)$ such that, for any $d$
circular permutations of a set of items $S$, it is possible to $k$-color the items so that in
every permutation, every sequence of $p'(k,d)$ contiguous items contains all colors.

A restricted geometric version of this problem in $\Re^2$ consists of coloring a point set $S$ with respect to wedges.
For our purposes, a wedge is any area delimited by two half-lines with common endpoint at one of $d$ given apices.
Each apex induces a circular ordering of the points in $S$. This is illustrated in Figure~\ref{fig}.
We aim at coloring $S$ so that any wedge  containing at least $p'(k,d)$ points is polychromatic.
In $\Re^2$, the non-circular case corresponds to wedges with apices at infinity, hence the circular case can be seen as a generalization.

\begin{figure}
\center\includegraphics[width=\textwidth]{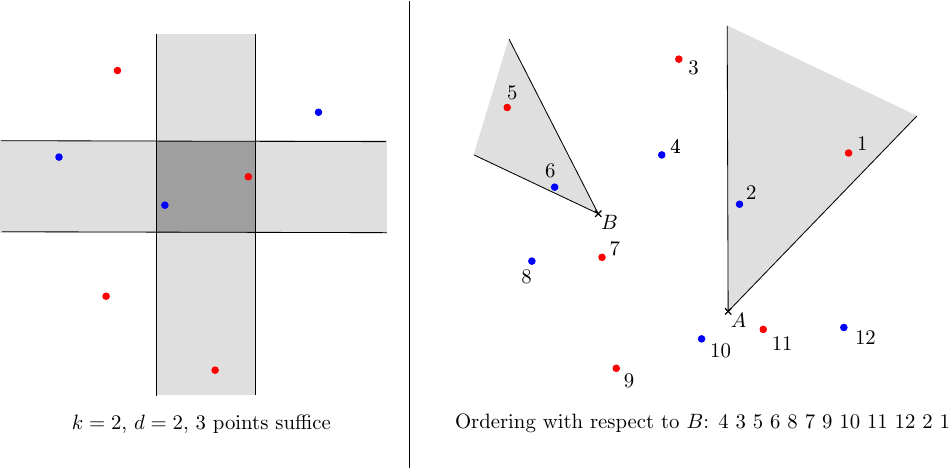}
\caption{\label{fig} Illustration of the definitions of $p(k,2)$ and $p'(k,2)$.
On the left, points are 2-colored so that any axis-aligned strip containing at least three points is bichromatic.
On the right, two points $A$ and $B$ define two circular permutations of the point set.
In this case, we wish to color the points so that there is no long monochromatic subsequence in either of the two circular orderings.}
\end{figure}

We then study a dual version of the problem, in which a set of axis-aligned strips is to be colored so
that sufficiently covered points are contained in strips from all color classes.
For instance, in the planar case we study the following function $\bar{p}(k,d)$. Let $H$ be a $k$-colored set of strips in $\Re^d$.
A point is said to be polychromatic with respect to $H$ if it is contained in strips of all
$k$ color classes. The function $\bar{p}(k,d)$ is the minimum number for which there always exists
a $k$-coloring of any set of strips in $\Re^d$ such that every point of $\Re^d$ contained in at least $\bar{p}(k,d)$ strips is polychromatic.


Note that the functions $p(k,d)$, $p'(k,d)$ and $\bar{p}(k,d)$ are monotone and non-decreasing (in particular, they all go to infinity when either $k$ or $d$ goes to infinity). Since we are interested in arbitrarily large pointsets, we always consider the set that we color to be ``large enough'' (that is, unbounded in terms of $k$).

\paragraph{Previous results.}
A {\em hypergraph} $(S,R)$ is defined by a set $S$ (called the {\em ground set}) and a set $R$ of
subsets of $S$. The main problem studied here is the coloring of geometric hypergraphs where the
ground set $S$ is a finite set of points, and the set of ranges $R$ consists of all subsets of $S$ that
can be isolated by a single strip. In the dual case the ground set $S$ is a finite set of geometric shapes
and the ranges are points contained in the common intersection of a subset of $S$.
In some places in the literature, finite geometric hypergraphs are also referred to as geometric {\em range spaces}.

Several similar problems have been studied in this context~\cite{Pa80,TT07,Aloup1}, where the range space is not defined by strips,
but rather by halfplanes, triangles, disks, pseudo-disks or translates of a centrally symmetric convex polygon.
The problem was originally stated in terms of decomposition of {\em $c$-covers} (or \emph{$f$-fold coverings}) in the plane:
A $c$-cover of the plane by a convex body $Q$ ensures that every point in the plane is covered by at least $c$ translated copies of $Q$.
In 1980, Pach~\cite{Pa80} asked if, given $Q$, there exists a function $f(Q)$ such that every $f(Q)$-cover of the plane
can be decomposed into $2$ disjoint $1$-covers. A natural extension is to ask if given $Q$, there exists a function $f(k,Q)$ such
that every $f(k,Q)$-cover of the plane can be decomposed into $k$ disjoint $1$-covers.
This corresponds to a $k$-coloring of the $f(k,Q)$-cover, such that every point of the plane is polychromatic.

Partial answers to this problem are known: Pach~\cite{PDMPC80} referenced an unpublished manuscript by Mani and Pach~\cite{decompball} showing that any 33-cover of the plane by unit disks can be decomposed into two $1$-covers. This would imply that the function $f$ exists for unit disks, but could still be exponential in $k$.
Recently, Tardos and T\'{o}th~\cite{TT07} proved that any 43-cover by translated copies of a triangle can be decomposed into two $1$-covers.
For the case of centrally symmetric convex polygons, Pach~\cite{Pach86} proved that $f$ is at most exponential in $k$.
More than 20 years later, Pach and T{\'o}th~\cite{pachtoth} improved this by showing that
$f(k,Q) = O(k^{2})$, and was afterwards Aloupis et al.~\cite{Aloup2} proved that $f(k,Q)=O(k)$. Recently, Gibson and Varadarajan~\cite{Gib-Vara09} showed that the same property also holds for any arbitrary convex polygon $Q$.

On the other hand, for the range space induced by arbitrary disks, Mani and Pach~\cite{decompball}~(see also~\cite{pachindecomp})
proved that $f(2,Q)$ is unbounded: for any constant $c$, there exists a set of points that cannot be $2$-colored so that all open disks
containing at least $c$ points are polychromatic. Pach, Tardos and T{\'o}th~\cite{pachindecomp} obtained a similar
result for the range spaces induced by the family of either non-axis-aligned strips, or axis-aligned rectangles.
Specifically, for any integer $c$ there exist $c$-fold coverings with non-aligned strips that cannot be decomposed into two
coverings (i.e., cannot be $2$-colored). The previous impossibilities constitute our main motivation for introducing the problem
of $k$-coloring axis-aligned strips, and strips with a bounded number of orientations.

\paragraph{Paper Organization.}
In Section~\ref{sec_plane} we give constructive upper bounds on the functions $p$ and $p'$ for $d=2$. In Section~\ref{higher_dim} we consider higher-dimensional cases, as well as the computational complexity of finding a valid coloring. Section~\ref{sec_dual} concerns the dual problem of coloring strips with respect to points. 
 Our lower and upper bounds  are summarized in Table~\ref{tbl_res}.

\begin{table}
\caption{Bounds on $p$, $p'$ and $\bar{p}$}\label{tbl_res}
\renewcommand{\arraystretch}{1.5}
\center\begin{tabular}{|c|c|c|c|}
\hline
 & $p(k,d)$ & $p'(k,d)$ & $\bar{p}(k,d)$  \\ \hline
upper bound  & $k(4\ln k +\ln d)$& $k(4\ln k +\ln d)$ & $d(k{-}1)+1$ \\
&($2k{-}1$ for $d{=}2$)&($2k$ for $d{=}2$) &  \\
\hline
lower bound  & \multicolumn{2}{c|}{$2\cdot \lceil \frac{(2d-1)k}{2d} \rceil -1$} & $ \lfloor k/2 \rfloor d+1$ \\ \hline
\end{tabular}
\end{table}

\section{Axis-aligned strips and circular permutations for $d=2$}
\label{sec_plane}

We first consider upper bounds for the functions $p(k,2)$ and $p'(k,2)$.

\subsection{Axis-aligned strips: Upper bound on $p(k,2)$}

We refer to a strip containing at least $i$ points as an {\em $i$-strip}.
Our goal is to show that for any integer $k$ there is a
constant $p(k,2)$ such that any finite planar point
set can be $k$-colored so that all $p(k,2)$-strips are polychromatic.\medskip

For $d=2$, there is a reduction to the recently studied
problem of $2$-coloring graphs so that monochromatic
components are small. Haxell et al.~\cite{HST} proved that the vertices of
any graph with maximum degree $4$ can be $2$-colored so
that every monochromatic connected component has size at most
$6$. For a given finite point set $S$ in the plane, let $E$ be the set
of all pairs of points $u,v \in S$ such that there is a strip containing
only $u$ and $v$. The graph $G=(S,E)$ has maximum degree $4$,
as it is the union of two paths. By the results of \cite{HST}, $G$ can be
$2$-colored so that every monochromatic connected component has
size at most $6$. In particular every path of size at
least $7$ contains points from both color classes. To finish the
reduction argument one may observe that every strip
containing at least $7$ points corresponds to a path (of size at
least $7$) in $G$.
We improve and generalize this first bound in the following.

\begin{theorem}
\label{poly-strips}
For any finite planar set $S$ and any integer $k$, $S$ can
be $k$-colored so that any $(2k{-}1)$-strip is polychromatic.
That is,
$$
p(k,2) \leq 2k-1.
$$
\end{theorem}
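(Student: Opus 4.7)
The plan is to reduce the problem to a bipartite edge-coloring question via a blocking construction. First I would sort the points of $S$ by $x$-coordinate and partition them into consecutive \emph{$x$-blocks} of size $k$: $B^x_1 = \{p_1,\ldots,p_k\}$, $B^x_2 = \{p_{k+1},\ldots,p_{2k}\}$, and so on (the last $x$-block may be smaller). Do the same along the $y$-axis to obtain $y$-blocks $B^y_1, B^y_2, \ldots$ A simple modular-arithmetic observation shows that any window of $2k{-}1$ consecutive points along either axis entirely contains some complete $k$-block: there are $k$ possible starting positions for a $k$-block inside such a window, exactly one of which is $\equiv 1 \pmod{k}$, and the corresponding block fits inside. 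Since the window lies in $[1,n]$, so does this block, hence it is a genuine complete block. So it suffices to produce a $k$-coloring in which every complete $k$-block in either ordering is rainbow.

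Next I would construct a bipartite multigraph $H$ whose left vertices are the $x$-blocks, whose right vertices are the $y$-blocks, and whose edges are the points of $S$: each $p \in S$ is the edge joining its (unique) $x$-block to its (unique) $y$-block, with parallel edges allowed. By construction, every complete $k$-block is a vertex of degree exactly $k$, while the at most two incomplete boundary blocks have smaller degree. Thus $H$ has maximum degree $k$.

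By K\"onig's edge-coloring theorem for bipartite multigraphs, $H$ admits a proper $k$-edge-coloring. Reinterpreting this edge-coloring as a coloring of the points of $S$, each complete $k$-block---being a degree-$k$ vertex of $H$---sees all $k$ distinct colors on its incident edges, so it is rainbow. Combined with the first paragraph, every $(2k{-}1)$-strip contains a rainbow $k$-block and is therefore polychromatic, establishing $p(k,2) \le 2k-1$.

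The only subtlety I expect is the careful handling of boundary blocks: one must confirm that the modular argument really produces a \emph{complete} (not partial) block inside every valid $(2k{-}1)$-window, which is what forces the upper endpoint argument in the first paragraph. Once that is in place, the rest is a one-line application of K\"onig's theorem, which is considerably cleaner than (and strictly improves) the reduction to the Haxell--Szab\'o--Tardos component-size theorem sketched above.
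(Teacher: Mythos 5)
Your proof is correct and takes essentially the same approach as the paper's: partition the $x$- and $y$-orderings into blocks of size $k$, form the bipartite multigraph with blocks as vertices and points as edges, and apply K\"onig's edge-coloring theorem so that each complete $k$-block (which every $(2k{-}1)$-window contains) is rainbow. The only difference is bookkeeping for $k \nmid n$: the paper pads $S$ with $k - (n \bmod k)$ dummy points placed beyond all of $S$ to make the multigraph $k$-regular, whereas you invoke the maximum-degree form of K\"onig's theorem directly on a graph with at most two lower-degree boundary vertices, which is a slightly cleaner way to dispose of the same technicality.
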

\begin{proof}
Let $s_1,\ldots,s_n$ be the points of $S$ sorted by (increasing) $x$-coordinates and let
$s_{\pi_1},\ldots,s_{\pi_n}$ be the sorting by $y$-coordinates. We first
assume that $k$ divides $n$, and later show how to remove the need for this assumption.
Let $V_x$ be the set of $n/k$ disjoint contiguous $k$-tuples in $s_1,\ldots,s_n$.
Namely, $V_x=\{\{s_1,\ldots,s_k\},\{s_{k+1},\ldots,s_{2k}\},\ldots,\{s_{n-k+1},\ldots,s_n\}\}$.
Similarly, let $V_y$ be the $k$-tuples defined by $s_{\pi_1},\ldots,s_{\pi_n}$.\medskip

We define a bipartite multigraph $G=(V_x,V_y,E)$ as follows:
For every pair of $k$-tuples $A\in V_x$, $B\in V_y$, we include an edge
$e_s =\{A,B\} \in E$ if there exists a point $s$ in both $A$ and $B$.
Note that an edge $\{A,B\}$ has multiplicity $\cardin{A\cap B}$ and that
the number of edges $\cardin{E}$ is $n$. The
multigraph $G$ is $k$-regular because every $k$-tuple $A$ contains
exactly $k$ points and every point $s\in A$ determines exactly one
incident edge labeled $e_s$. It is well known that the chromatic index of
any bipartite $k$-regular multigraph is $k$ (and can be
efficiently computed, see e.g., \cite{Alon03,ColeOstSchirra01}).
Namely, the edges of such a multigraph can be partitioned into
$k$ perfect matchings. Let $E_1,\ldots,E_k$ be such a
partition and $S_i \subset S$ be the set of labels of the edges
of $E_i$. The sets $S_1,\ldots,S_k$ form a partition (i.e., a
coloring) of $S$. We assign color $i$ to the points of $S_i$.\medskip

We claim that this coloring ensures that any
$(2k{-}1)$-strip is polychromatic. Let $h$ be a $(2k{-}1)$-strip and assume without
loss of generality that $h$ is parallel to the $y$-axis. Then $h$
contains at least one $k$-tuple $A \in V_x$. By the properties of
the above coloring, the edges incident to $A$ in $G$ are colored
with $k$ distinct colors. Thus, the points that correspond to the labels of
these edges are colored with $k$ distinct colors, and $h$ is
polychromatic.\medskip

To complete the proof, we must handle the case where $k$ does
not divide $n$. Let $i= n\pmod k$.
Let $Q=\{q_1,\ldots,q_{k-i}\}$
be an additional set of $k{-}i$ points, all located to the right and above
the points of $S$. We repeat our previous construction on $S\cup Q$.
Now, any $(2k{-}1)$-strip which
is, say, parallel to the $y$-axis will also contain a $k$-tuple
$A \in V_x$ disjoint from $Q$. Thus our arguments follow as before.
\end{proof}

The proof of Theorem~\ref{poly-strips} is constructive and leads directly to
an $O(n\log n)$-time algorithm to $k$-color $n$ points in the plane so that every
$(2k{-}1)$-strip is polychromatic.
The algorithm is simple: we sort $S$, construct $G=(V_x,V_y,E)$, and color the edges of $G$ with $k$ colors.
The time analysis is as follows: sorting takes $O(n \log n)$ time. Constructing $G$ takes $O(n+|E|)$ time.
As $G$ has $\frac{2n}{k}$ vertices and is $k$-regular, it has $n$ edges; so this step takes $O(n)$ time. Finding the
edge-coloring of $G$ takes $O(n \log n)$ time~\cite{Alon03}. The total running time is therefore $O(n \log n)$.

\subsection{Circular permutations: Upper bound on $p'(k,2)$}
\label{sec_wedges}

We now consider the value of $p'(k,d)$. Given $d$ circular permutations of a set $S$, we color $S$ so that every sufficiently long subsequence in any of the circular permutations is polychromatic.
The previous proof for $p(k,d)\leq 2k{-}1$ (Theorem~\ref{poly-strips}) does not hold when we consider circular permutations. However, a slight modification provides the same upper bound, up to a constant term.

\begin{theorem}\label{theo_pprime}
\hskip 0.3in $p'(k,2) \leq 2k$
\end{theorem}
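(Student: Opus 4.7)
My plan follows the construction of Theorem~\ref{poly-strips} with a small adjustment for circular wraparound. When $k \mid n$, I would partition each of the two circular permutations of $S$ into $n/k$ consecutive $k$-tuples to form $V_x$ and $V_y$, build the bipartite multigraph $G=(V_x,V_y,E)$ exactly as in Theorem~\ref{poly-strips}, and apply K\"onig's theorem to obtain a proper $k$-edge-coloring, which I use as the coloring of $S$. In the circular setting, any window of $2k-1 \le 2k$ consecutive items spans at most three consecutive tuples with the middle one fully inside, so it contains a complete $k$-tuple of $V_x$ (or $V_y$) and hence is polychromatic.

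For $k \nmid n$, set $i = n \bmod k$ and pad $S$ with $k-i$ virtual items $Q$ forming a contiguous block in both circular permutations (geometrically, by choosing $Q$ at coordinates exceeding those of $S$ in both axes). Now $|S'|=n+k-i$ is divisible by $k$, so the construction above applies to $S'$ and yields a $k$-coloring of $S'$ which I restrict to $S$. To verify that every circular window $W$ of $2k$ items of $S$ is polychromatic I split into cases based on its image $W'$ in the padded sequence. If $W'$ does not cross the $Q$-block, then $W'$ consists of $2k$ consecutive items of $S'$ avoiding $Q$, and the argument from Theorem~\ref{poly-strips} produces a complete $V_x$-tuple inside $W'$ that is disjoint from $Q$. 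If $W'$ does cross $Q$, then $W'$ has length $3k-i$ and contains all $k-i$ items of $Q$; the extra item afforded by the circular bound $2k$ (relative to the linear $2k-1$) is precisely what makes it possible to locate a complete $V_x$-tuple inside $W'$ other than the single mixed tuple holding $Q$. The argument for the other circular permutation is symmetric.

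The step I expect to require the most care is this wrap case, especially when the remainder $i$ is not small, since a naive count shows that the padded $3k-i$-window is only guaranteed to contain a single complete tuple in the worst alignment. Resolving this requires either shifting the starting position of the tuple partition relative to the $Q$-block so that the adversarial alignment does not occur, or equivalently choosing the geometric insertion point of $Q$ appropriately. Once this detail is handled, the argument mirrors Theorem~\ref{poly-strips} and the bound $2k$ follows.
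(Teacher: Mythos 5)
The case $k \mid n$ is fine and is essentially the paper's argument. The divergence is in how you handle $k \nmid n$, and there the proposal has a genuine gap that your own hedging correctly identifies but does not resolve.

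Concretely, suppose you pad with a contiguous block $Q$ of $k-i$ dummies, placed so that $Q$ lies inside a single tuple $T$. A window $W$ of $2k$ real items that straddles the $Q$-gap corresponds to a padded arc $W'$ of length $3k-i$. Parameterize $W'$ by its starting offset $j \bmod k$ relative to the tuple boundaries. When $0 < j < i$ (a range of $i-1$ positions that genuinely occurs, since $W$ may start at any of the $\approx 2k$ real positions flanking $Q$), the arc $W'$ contains exactly one complete tuple, and that tuple is $T$ itself. But $T$ contributes only $i$ real items (hence at most $i$ colors) to $W$; the remaining items of $W$ lie in two partial tuples, and nothing in the K\"onig edge-coloring forces those partial tuples to supply the $k-i$ missing colors. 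So such a $W$ can fail to be polychromatic. Neither ``shifting the tuple partition'' nor ``choosing where to insert $Q$'' escapes this: the bad offsets $j$ are defined \emph{relative} to $Q$ and the partition, so for any joint choice there remain windows straddling $Q$ with the unlucky alignment. The single contiguous padding block is the wrong primitive.

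The paper's proof takes a genuinely different route precisely to avoid this. Rather than one padding block, it partitions each circular permutation into $2a$ groups of \emph{alternating} sizes $k$ and roughly $b/a$ (with $a=\lfloor n/k\rfloor$, $b=n\bmod k$), pads only the small groups up to size $k$, edge-colors, and deletes dummies. Because dummies are distributed into $a$ evenly spaced small groups rather than one block, every big group (all real, all $k$ colors) is followed after at most $\lceil b/a\rceil$ real items by the next big group, so the longest stretch missing some color has length at most $2(k-1)+\lceil b/a\rceil$. Under the paper's standing ``large enough'' assumption ($n\ge k(k-1)$, hence $a\ge k-1\ge b$ and $\lceil b/a\rceil\le 1$) this gives $p'(k,2)\le 2k$. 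To salvage your argument you would need to replace the single $Q$-block by some comparably distributed padding scheme; as stated, the wrap case is not just ``in need of care'' but actually breaks.
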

\begin{proof}
If $k$ divides $n$, we separate each circular permutation into $n/k$ sets of size $k$. We define a multigraph, where the vertices represent the sets of $k$ items, and there is an edge between two vertices if two sets share the same item. Trivially, this graph is $k$-regular and bipartite, and can thus be edge-colored with $k$ colors. Each edge in this graph corresponds to one item in the permutation, thus each set of $k$ items contains points of all $k$ colors.\medskip

If $k$ does not divide $n$, let $a=\lfloor n/k \rfloor$, and $b=n\pmod k$.  If $a$ divides $b$, we separate each of the two circular permutations into $2a$ sets, of  alternating sizes $k$ and $b/a$. Otherwise, the even sets will also alternate between size $\lceil b/a \rceil$ and $\lfloor b/a \rfloor$, instead of $b/a$. We extend both permutations by adding dummy items to each set of size less than $k$, so that we finally have only sets of size $k$. Dummy items appear in the same order in both permutations. We can now define the multigraph just as before. \medskip

If we remove the dummy nodes, we deduce a coloring for our original set. As each color appears in every set of size $k$, the length of any subsequence between two items of the same color is at most $2(k-1) + \lceil b/a \rceil$. Therefore, $p'(k,2)\leq 2(k-1) + \lceil b/a \rceil +1$.\medskip

Finally, if $n \ge k(k-1)$, then $a \ge k-1$, and $b \le a$, we know that $\lceil b/a \rceil\le 1$, and thus $p'(k,2)\leq 2k$.
\end{proof}

\section{Higher dimensional strips}
\label{higher_dim}

In this section we study the same problem for strips in higher dimensions.
We provide upper and lower bounds on $p(k,d)$.
We then analyze the coloring problem from a computational viewpoint, and show that
deciding whether a given instance $S \subset \Re^d$ can be 2-colored
such that every 3-strip is polychromatic is NP-complete.

\subsection{Upper bound on strip size, $p(k,d)$}\label{uperB}

\begin{theorem}
\label{lll_strips}
Any finite set of points $S \subset \Re^d$ can be $k$-colored so that every axis-aligned strip
containing $k(4 \ln k +\ln d)$ points is polychromatic, that is,
$$
p(k,d) \leq k(4 \ln k +\ln d).
$$
\end{theorem}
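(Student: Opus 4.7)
The plan is to apply the symmetric Lovász Local Lemma to a uniformly random $k$-coloring. First, I would color each point of $S$ independently and uniformly at random with one of the $k$ colors, and set $s := k(4\ln k + \ln d)$. For each of the $d$ coordinate axes, sort the points along that axis and consider every window of exactly $s$ consecutive points. It suffices to show that with positive probability no such window is missing a color, because any axis-aligned strip containing at least $s$ points contains, in the appropriate sorted order, such a contiguous window of length $s$.

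Next, I would introduce a bad event $A_W$ for each window $W$: the event that $W$ is not polychromatic. By the union bound over the $k$ colors,
\[
\Pr[A_W] \;\le\; k\left(1-\tfrac{1}{k}\right)^s \;\le\; k\,e^{-s/k}.
\]
For the dependency structure, $A_W$ is determined solely by the colors of the $s$ points in $W$, so $A_W$ is mutually independent of every $A_{W'}$ with $W\cap W'=\emptyset$. A single point lies in at most $s$ windows per axis (it can occupy any of the $s$ positions in such a window), hence in at most $ds$ windows in total. Therefore each event $A_W$ shares a point with at most $D := s\cdot ds - 1 \le ds^{2}$ other events.

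The concluding step is to verify the symmetric Lovász Local Lemma condition $e\cdot \Pr[A_W]\cdot(D+1)\le 1$, i.e.,
\[
e\cdot k\,e^{-s/k}\cdot(ds^{2}+1)\;\le\;1,
\]
which, upon taking logarithms, reduces to checking $s/k \ge 1 + \ln k + \ln d + 2\ln s + o(1)$. Substituting $s=k(4\ln k+\ln d)$ makes the left side $4\ln k+\ln d$, and the right side grows only as $3\ln k+\ln d+O(\ln\ln k)$, so the inequality holds (for $k$ and $d$ not both tiny, where the bound is anyway trivially valid because $p(k,d)$ may be taken at most the size of the point set or any previously established estimate).

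The main obstacle is not conceptual — it is the routine but slightly fiddly calculation of pinning down the constants so that $4\ln k+\ln d$ (rather than some larger multiple) suffices in the exponent; this is handled by being careful with the bound $(1-1/k)^{s}\le e^{-s/k}$ and observing that $2\ln s = O(\ln k + \ln\ln d)$ is dominated by the slack $\ln k$ between $4\ln k$ and the $3\ln k$ arising from $\ln(eks^{2})$. By the LLL, with positive probability none of the events $A_W$ occurs, giving a valid $k$-coloring and hence $p(k,d)\le k(4\ln k+\ln d)$.
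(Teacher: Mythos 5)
Your proposal is correct and follows essentially the same route as the paper: uniformly random $k$-coloring, bad events for non-polychromatic $s$-windows along each axis, a dependency estimate of order $ds^2$, and the symmetric Lov\'asz Local Lemma. The only difference is cosmetic—you bound the dependency degree by $ds^2$ whereas the paper uses the marginally tighter $(d-1)t^2+2t-2$ (noting that windows on the same axis overlap in only $O(t)$ ways, not $O(t^2)$)—and both proofs share the same caveat you candidly flag, namely that the stated constant only survives the final LLL verification for $k$ beyond a small threshold.
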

\begin{proof}
The proof uses the probabilistic method. Let $\{1,\ldots,k\}$ denote the set of $k$ colors. We randomly
color every point in $S$ independently
so that a point $s$ gets color $i$ with probability
$\frac{1}{k}$ for $i=1,\ldots,k$. For a $t$-strip $h$, let $\B_h$ be the ``bad'' event where $h$
is not polychromatic. It is easily seen that
$\Pr [\mathcal{B}_h] \leq k (1-\frac{1}{k})^t$.
Moreover, $\B_h$ depends on at most $(d-1)t^2 + 2t-2$ other events.
Indeed, $\B_h$ depends only on $t$-strips that share points with $h$. Assume without
loss of generality that $h$ is orthogonal to the $x_1$ axis. Then $\B_h$ has a non-empty intersection with at most $2(t-1)$
other $t$-strips which are orthogonal to the $x_1$ axis. For each of the other $d{-}1$ axes, $h$ can intersect at most $t^2$
$t$-strips since every point in $h$ can belong to at most $t$ other $t$-strips.\medskip

By the Lov\'asz Local Lemma, (see, e.g., \cite{AS00}) we have that
if $t$ satisfies
$$
e \cdot \left((d-1)\cdot t^2+ 2t-1\right) \cdot k\left(1-\frac{1}{k}\right)^t < 1
$$
(where $e$ is the basis of the natural logarithm), then
$$
\Pr\left[\bigwedge_{\cardin{h} = t} \bar{\B_h}\right]>0.
$$
In particular, this means that there exists a $k$-coloring for which
every $t$-strip is polychromatic. It can be verified that $t=k(4 \ln k +\ln d)$ satisfies the condition.
\end{proof}

The proof of Theorem~\ref{lll_strips} is non-constructive. However, we can use known algorithmic versions of the Local Lemma (see for instance~\cite{1536462}) to obtain a constructive proof. 
Also note that Theorem~\ref{lll_strips} holds in the more general case where the strips are not necessarily axis-aligned. In fact, one can have a total of $d$ arbitrary strip orientations  in some fixed arbitrary dimension and the proof will hold verbatim. Finally, we note that the same proof also works for the case of circular permutations, yielding the same upper bound:

\begin{theorem}
\label{lll_circular}
\hskip 0.3in $p'(k,d) \leq k(4 \ln k +\ln d)$
\end{theorem}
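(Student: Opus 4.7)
The plan is to recycle the Lov\'asz Local Lemma argument of Theorem~\ref{lll_strips} essentially verbatim, after verifying that the dependency count does not grow in the circular setting. I would color each of the $n$ items independently and uniformly at random from $\{1,\ldots,k\}$, and for every contiguous length-$t$ window $h$ in any of the $d$ given circular permutations, introduce a bad event $\B_h$ that $h$ misses some color. A union bound over the $k$ colors gives $\Pr[\B_h]\leq k(1-1/k)^t$, exactly as in the linear case, since this tail bound depends only on $|h|=t$ and on the independence and uniformity of the color assignment.

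The heart of the argument is the dependency analysis. Fix a window $h$ in, say, the first circular permutation. Within the same permutation, exactly $2(t-1)$ other length-$t$ windows share an item with $h$, since a cyclic length-$t$ window has the same number of length-$t$ overlapping neighbors as an interior window on a line---circularity does not inflate this count. In each of the remaining $d-1$ circular permutations, each of the $t$ items of $h$ lies in exactly $t$ length-$t$ windows, contributing at most $t\cdot t = t^2$ overlapping windows per direction. Hence $\B_h$ depends on at most $(d-1)t^2 + 2(t-1)$ other bad events, matching the bound used in the proof of Theorem~\ref{lll_strips}.

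Substituting into the Lov\'asz Local Lemma yields exactly the same sufficient condition
$$e\cdot\bigl((d-1)t^2+2t-1\bigr)\cdot k\bigl(1-\tfrac{1}{k}\bigr)^t < 1,$$
and the choice $t = k(4\ln k + \ln d)$ satisfies it by the calculation already carried out inside the proof of Theorem~\ref{lll_strips}. With positive probability none of the bad events occurs, producing a $k$-coloring in which every length-$t$ contiguous subsequence of every circular permutation is polychromatic, giving $p'(k,d) \leq k(4\ln k + \ln d)$.

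The only real obstacle is the small bookkeeping check that circularity does not inflate the dependency count---a minor point, but one that must be stated explicitly, since the earlier proof implicitly pictured strips on a line rather than on a circle. Once that verification is made, no genuinely new ideas are required and LLL closes the proof.
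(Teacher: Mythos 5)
Your proof is correct and takes the same Lov\'asz Local Lemma approach the paper implicitly uses: the paper gives no separate argument for Theorem~\ref{lll_circular}, merely asserting that the proof of Theorem~\ref{lll_strips} carries over verbatim. Your explicit check that circularity does not increase the dependency count---still at most $2(t-1)$ overlapping windows in the same permutation and at most $t^2$ in each other permutation---is exactly the bookkeeping needed to justify that assertion, so the proof goes through as you claim.
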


\subsection{Lower bound on $p(k,d)$}

We first introduce a well-known result on the decomposition of complete graphs:
\begin{lemma}
\label{lem-decompEven}
The edges of $K_{2h}$ can be decomposed into $h$ pairwise edge-disjoint Hamiltonian paths.
\end{lemma}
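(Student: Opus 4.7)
The plan is to reduce this to Walecki's classical theorem that the complete graph on an odd number of vertices decomposes into edge-disjoint Hamiltonian cycles. Specifically, I would first show the edges of $K_{2h+1}$ can be partitioned into $h$ edge-disjoint Hamiltonian cycles $C_1,\ldots,C_h$, and then delete a designated vertex $v$ from each cycle. Each cycle $C_i$ contains exactly two edges incident to $v$, and removing $v$ together with these two edges yields a Hamiltonian path $P_i$ on the remaining $2h$ vertices. Since the $C_i$ are edge-disjoint and together cover all $\binom{2h+1}{2}$ edges of $K_{2h+1}$, the paths $P_i$ are edge-disjoint and together cover all edges of $K_{2h+1}$ except those incident to $v$, that is, exactly the edges of $K_{2h}$. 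An edge count check confirms the arithmetic: $K_{2h}$ has $h(2h{-}1)$ edges and each of the $h$ Hamiltonian paths contributes $2h{-}1$ edges.

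For the Walecki step, I would label the vertex set of $K_{2h+1}$ as $\{\infty, 0, 1, \ldots, 2h-1\}$ and define the base Hamiltonian cycle
\[
C_0 \;=\; \infty,\; 0,\; 1,\; 2h{-}1,\; 2,\; 2h{-}2,\; 3,\; \ldots,\; h,\; \infty,
\]
a zigzag through $\mathbb{Z}_{2h}$ starting at $0$. Then define $C_i$ for $i=1,\ldots,h-1$ by adding $i$ (mod $2h$) to every non-$\infty$ label along $C_0$. The crux is then a difference-counting argument on $\mathbb{Z}_{2h}$: the consecutive differences along the non-$\infty$ portion of $C_0$ realize each nonzero value in $\{1, 2, \ldots, 2h-1\}$ exactly once, and the two edges incident to $\infty$ correspond to the two endpoints $0$ and $h$. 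Rotating by $i$ therefore produces pairwise edge-disjoint Hamiltonian cycles whose union is all of $K_{2h+1}$.

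The main obstacle is verifying the two edge-disjointness claims in the Walecki step: that the $h$ rotations of the zigzag give edge-disjoint cycles on $\{0,\ldots,2h-1\}$, and that the $2h$ edges incident to $\infty$ are partitioned evenly (two per cycle) by the rotations. Both reduce to routine bookkeeping in $\mathbb{Z}_{2h}$ once one observes that the multiset of differences in the zigzag is $\{1, -1, 2, -2, \ldots, h-1, -(h-1), h\}$, which represents every nonzero residue exactly once, and that the $\infty$-endpoints $0, h$ together with their rotates $\{i, h+i\}_{i=0}^{h-1}$ cover $\mathbb{Z}_{2h}$.

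As an alternative that avoids most of the verification, one can simply cite the classical reference \cite{decompEven} for the Hamiltonian-cycle decomposition of $K_{2h+1}$ and present only the vertex-deletion step, which is the self-contained observation that removing a common vertex from a Hamilton-cycle decomposition of $K_{2h+1}$ produces a Hamilton-path decomposition of $K_{2h}$.
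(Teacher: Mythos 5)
The paper does not actually prove this lemma; it cites \cite{decompEven} and treats the statement as a known fact, using it as a black box in the lower-bound construction. Your argument supplies a correct proof. The vertex-deletion reduction is clean and right: a decomposition of $K_{2h+1}$ into $h$ edge-disjoint Hamiltonian cycles, after deleting one vertex $v$ and the $2h$ edges through it, leaves $h$ edge-disjoint Hamiltonian paths on the remaining $2h$ vertices whose union is all of $K_{2h}$, and the edge count $h(2h-1)=\binom{2h}{2}$ confirms there is nothing missing or double-counted. The Walecki zigzag-and-rotate construction you sketch for the odd case is the standard one, and the difference bookkeeping in $\mathbb{Z}_{2h}$ is correct: the $2h-1$ consecutive differences along the zigzag are $1,-2,3,-4,\ldots$, which as residues give every nonzero element of $\mathbb{Z}_{2h}$ exactly once, and the two $\infty$-endpoints $\{0,h\}$ rotate to the sets $\{i,h+i\}$ for $i=0,\ldots,h-1$, which partition $\mathbb{Z}_{2h}$ and hence distribute the $2h$ edges at $\infty$ two per cycle. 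One small remark: you can shortcut the detour through $K_{2h+1}$ by running the same zigzag $0,1,2h-1,2,2h-2,\ldots,h$ and its $h$ rotates directly on the vertex set $\mathbb{Z}_{2h}$; this is immediately the desired Hamiltonian-path decomposition of $K_{2h}$, and Walecki's cycle decomposition of $K_{2h+1}$ is then obtained by the reverse move of adjoining $\infty$ and closing each path into a cycle. Either direction is standard, and your proof is sound.
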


This result follows from a special case of the Oberwolfach problem~\cite{Alspach08}.  An explicit proof of this lemma can also be found in \cite{decompEven}. 

Note that if the vertices of $K_{2h}$ are labeled  $V=\{1, \ldots, 2h\}$, each path can be seen as a permutation of $2h$ elements. Using Lemma~\ref{lem-decompEven} we obtain:

\begin{theorem}\label{thm_lb}
For any fixed dimension $d$ and number of colors $k$, let $s=\left\lceil\frac{(2d-1)k}{2d}\right\rceil -1$. Then,
$$
p'(k,d) \geq p(k,d) \geq 2s +1.
$$
\end{theorem}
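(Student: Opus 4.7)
The plan is to exhibit, for each $k$ and $d$, a point set $S\subset\Re^d$ for which no $k$-coloring makes every $2s$-strip polychromatic, giving $p(k,d)\geq 2s+1$. The construction will use the Hamilton-path decomposition of Lemma~\ref{lem-decompEven}.

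First I would choose an integer $h\geq d$ as an appropriate function of $k$ and $d$ and take $V=\{1,\ldots,2h\}$. By Lemma~\ref{lem-decompEven}, $K_{2h}$ decomposes into $h$ edge-disjoint Hamilton paths, each of which is a permutation of $V$. Selecting $d$ of these paths and reading them as the $d$ coordinate orderings realizes $V$ as a point set $S\subset\Re^d$ of size $|S|=2h$ whose $j$-th coordinate ranking is the $j$-th chosen Hamilton path. By edge-disjointness, the $d(2h-1)$ consecutive pairs across the $d$ orderings are pairwise distinct edges of $K_{2h}$.

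Next, assuming for contradiction that a $k$-coloring $\chi$ of $S$ makes every $2s$-strip polychromatic, I would combine a per-color segment argument with the Hamilton structure. The per-color argument says that in each chosen path, every color $c$ partitions the path into maximal non-$c$ runs each of length at most $2s-1$, which lower-bounds the number of such runs and hence the number of consecutive pairs where the color changes from $c$ to non-$c$ or vice versa. The Hamilton-decomposition structure in turn imposes a rigid counting relation between the total number of monochromatic consecutive pairs across the $d$ paths and the color-class sizes $|\chi^{-1}(c)|$, since the $d$ paths jointly use $d(2h-1)$ distinct edges of $K_{2h}$. Matching the two counts should yield a contradiction whenever $s<\lfloor(2d-1)k/(2d)\rfloor$, provided $h$ is chosen sufficiently large relative to $k$ and $d$.

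The main obstacle I anticipate is extracting the tight factor $(2d-1)/(2d)$. A pure pigeonhole on color sizes alone yields only $s\gtrsim k/2$, independent of $d$. To gain the extra $(2d-1)/(2d)$ factor, the counting must use the structural fact that each vertex participates in $2d-1$ consecutive pairs across the chosen paths, which is precisely the quantity that surfaces in the final bound on $s$. Handling endpoint versus interior vertices in the Hamilton paths, and correctly managing the floor function in the definition of $s$, will require care to match the stated integer lower bound exactly.
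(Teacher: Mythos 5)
Your proposal takes a genuinely different route from the paper's, and in its current form it has a real gap: the central mechanism of the paper's proof is absent.

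The paper does not place a single point per vertex. It decomposes $K_{2d}$ (not $K_{2h}$ for some large $h$) into exactly $d$ Hamilton paths $\sigma_1,\ldots,\sigma_d$, builds $2d$ points whose $j$-th coordinate order is $\sigma_j$, and then \emph{replaces each point $p_i$ by a tight cluster $A_i$ of $s$ nearby points}. Because the $d$ paths use up \emph{all} $\binom{2d}{2}$ edges of $K_{2d}$, every unordered pair $\{i,j\}$ is adjacent in some $\sigma_\ell$, and therefore there is an axis-aligned $2s$-strip whose contents are exactly $A_i\cup A_j$. Now the argument is a one-line pigeonhole: each $A_i$ has only $s$ points, so its set of missing colors $M_i$ has $|M_i|\ge k-s$; if every $A_i\cup A_j$ were polychromatic then the $M_i$ would be pairwise disjoint, forcing $2d(k-s)\le k$, i.e.\ $s\ge (2d-1)k/(2d)$, which contradicts $s=\lfloor (2d-1)k/(2d)\rfloor$. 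The cluster blow-up is exactly what converts the strip condition into a clean local constraint on pairs of $s$-sets, which is where the factor $(2d-1)/(2d)$ comes from.

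Your sketch omits this blow-up, and that is the gap. With one point per vertex of $K_{2h}$ you no longer have a family of identical small ``windows'' whose pairwise unions are forced; instead you are left counting monochromatic edges against color-class sizes, and you concede yourself both that the run-length count only gives $s\gtrsim k/2$ and that you do not see how the Hamilton-decomposition structure supplies the remaining $(2d-1)/(2d)$ factor (``matching the two counts should yield a contradiction \ldots will require care''). That missing step is the whole theorem. There are also two secondary mismatches worth noting. First, taking $h>d$ and selecting only $d$ of the $h$ paths loses the property that the $d$ orderings jointly realize \emph{every} pair as an adjacency, which is precisely what the paper exploits; your construction makes most pairs unconstrained. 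Second, a counting argument over edges is sensitive to how endpoint vertices of the paths are handled, whereas the paper's pigeonhole argument is uniform over all $2d$ clusters and needs no endpoint case analysis. To repair your approach you would essentially have to rediscover the cluster construction; once you do, the long counting machinery you propose becomes unnecessary.
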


\begin{proof}
The first inequality comes from the fact that any polychromatic coloring with respect to circular permutations is also polychromatic with respect to strips. We now focus on showing the second inequality: let $\sigma_1, \ldots, \sigma_d$ be any decomposition of $K_{2d}$ into $d$ paths: we construct the set $P=\{p_i| 0 \leq i \leq 2d \}$, where $p_i=(\sigma_1(i), \ldots, \sigma_d(i))$. Note that the ordering of $P$, when projected to the $i$-th axis, gives permutation $\sigma_i$. Since the elements $\sigma$ decompose $K_{2d}$, in particular for any $i,j \leq 2d$ there exists a permutation in which $i$ and $j$ are adjacent.

We replace each point $p_i$ by a set $A_i$ of $s$ points
arbitrarily close to $p_i$. By construction, for any $i,j \leq
2d$, there exists a $2s$-strip containing exactly $A_i \cup A_j$.
Consider any possible coloring of the sets $A_i$: since $|A_i|=s$
and we are using $k$ colors, there are at least $k-s$ colors not
present in any set $A_i$. 

Since $\left\lceil\frac{(2d-1)k}{2d}\right\rceil -1 < \frac{(2d-1)k}{2d}$, we conclude that $k-s > k- \frac{(2d-1)k}{2d}= k/2d$. That is, each set is missing strictly more than $k/2d$ colors. By the pigeonhole principle, there exist $i$ and $j$ such that the set $A_i \cup A_j$ is
missing a color (otherwise there would be more than $k$ colors).
In particular, the strip that contains set $A_i \cup A_j$ is not
polychromatic, thus the theorem is shown.

We gave a set of bounded size $n=2d$ reaching the lower bound, but we can easily create larger sets reaching the same bound: we can add as many dummy points as needed at the end of every permutation, which does not decrease the value of $p(k,d)$.
\end{proof}

Note that, assymptotitically speaking, the lower bound does not depend on $d$. However, by the negative result of \cite{pachindecomp}, we know that $p(k,d) \rightarrow \infty$ when $d\rightarrow \infty$.

\subsection{Computational complexity}
In Section~\ref{sec_plane}, we provided an algorithm that finds
a $k$-coloring such that every planar $(2k{-}1)$-strip is polychromatic.
Thus for $d{=}2$ and $k{=}2$, this yields a $2$-coloring such that every $3$-strip is polychromatic.

Note that in this case $p(2,2)=3$, but the minimum required size of a strip for a given instance can be either $2$ or $3$. Testing if it is equal to 2 is easy: we can simply alternate the colors in the first permutation, and check if they also alternate in the other. Hence the problem of minimizing the size of the largest monochromatic strip on a given instance is polynomial for $d=2$ and $k=2$. We now show that it becomes NP-hard for $d>2$ and $k=2$. The same problem for $k>2$ is left open.

\begin{theorem}\label{theoNPhard}
The following problem is $NP$-complete:\\
{\bf {\em Input}:} $3$ permutations $\pi _1, \pi _2,\pi _3 $ of an $n$-element set $S$.\\
{\bf {\em Question}:} Is there a 2-coloring of $S$, such that every 3
elements of $S$ that are consecutive according to one of the
permutations are not monochromatic?
\end{theorem}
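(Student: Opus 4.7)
The proof has two parts: membership in NP and NP-hardness.

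Membership in NP is immediate: given a candidate 2-coloring, a single scan through each of the three permutations in $O(n)$ time verifies that no three consecutive elements share a color.

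For NP-hardness, I would reduce from Monotone Not-All-Equal 3-SAT---equivalently, 2-colorability of a 3-uniform hypergraph---which is a classical NP-complete problem. Given an instance with variables $V=\{v_1,\ldots,v_n\}$ and clauses $\mathcal{C}=\{C_1,\ldots,C_m\}$ of size three, the reduction constructs three permutations of an extended ground set $S=V^{*}\cup D$, where $V^{*}$ contains a bounded number of copies of each variable (one per clause occurrence) and $D$ is a set of auxiliary dummy elements. The goal is to arrange things so that the original instance is NAE-satisfiable if and only if the three permutations admit a valid 2-coloring.

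The reduction relies on three interlocking gadget types, woven together across the three permutations. A \emph{clause gadget} places the three variable-copies involved in a clause as a consecutive triple in one of the permutations, so that forbidding a monochromatic 3-strip directly encodes the NAE constraint on that clause. A \emph{copy (equality) gadget} propagates a single color across all copies of a given variable; a natural implementation exploits the fact that in any 2-coloring, $u\neq w$ together with $w\neq v$ forces $u=v$, so chains of forced inequalities equate arbitrary numbers of copies. A \emph{separator gadget} consists of short sequences of dummies whose colors are pinned to alternate, so that any consecutive triple that straddles two adjacent gadgets automatically contains a bichromatic pair and is therefore non-monochromatic for free.

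The principal obstacle is simultaneous realizability: every clause triple, every equality chain, and every separator imposes ordering constraints on the same three permutations, and one has to rule out spurious consecutive triples that would impose unintended constraints. Once the layout is fixed, correctness is routine. From a satisfying NAE-assignment, color every copy of $v_i$ with $v_i$'s truth value and extend to the dummies along the separators in the forced alternating way; by construction every clause triple is bichromatic, and no other consecutive triple is monochromatic. Conversely, from any valid 2-coloring of the three permutations, the copy gadgets guarantee a well-defined color per variable, and the clause gadgets then certify that no clause is monochromatic, yielding an NAE-satisfying assignment.
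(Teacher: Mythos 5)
Your overall strategy matches the paper's: reduce from (monotone) NAE~3SAT, encode each clause as a consecutive triple in one permutation, and use the remaining permutations to force all copies of a variable to agree. Starting directly from Monotone NAE~3SAT rather than NAE~3SAT (and then monotonizing, as the paper does) is a harmless shortcut.

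However, the copy/equality gadget is where your proposal has a real gap. You write that the gadget ``exploits the fact that in any 2-coloring, $u\neq w$ together with $w\neq v$ forces $u=v$,'' but you never explain how to \emph{create} a forced inequality in the first place. The only constraints the target problem gives you are not-all-equal constraints on consecutive triples, and an NAE constraint over two colors does \emph{not} force any particular pair to differ: $\mathrm{NAE}(a,b,c)$ with $k=2$ only says exactly one of the three pairs is monochromatic, without specifying which. You cannot write $\mathrm{NAE}(u,w,w)$ because a permutation cannot repeat an element, so forcing $u\neq w$ requires a nontrivial gadget with auxiliary elements. This is precisely what the paper's consistency clauses accomplish: for each $i$, the four clauses $(Z^x_i,x_i,x'_i)$, $(x_i,x'_i,Z^x_{i+1})$, $(x_i,Z'^x_i,x'_i)$, $(Z^x_i,Z'^x_i,Z^x_{i+1})$ jointly force $x_i\neq x'_i$, because if $x_i=x'_i$ then the first three clauses each force one of $Z^x_i,Z'^x_i,Z^x_{i+1}$ to take the opposite color, making the fourth triple monochromatic. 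A symmetric block of four more clauses then forces $x'_i\neq x_{i+1}$, hence $x_i=x_{i+1}$. Your proposal asserts the existence of such forced inequalities as if they were immediate; they are not, and supplying them is the technical heart of the reduction. Relatedly, you acknowledge but do not resolve the ``simultaneous realizability'' obstacle --- that all clause triples, equality chains, and separators must be packed into only three permutations without creating spurious constrained triples --- and this, too, is where the paper's careful interleaving of $\pi_2$ and $\pi_3$ (with the clause elements $c_{2j-1},c_{2j}$ and paired dummies $\star,\star$ as buffers) does the real work. Your dummy ``separator'' idea is fine as a device for the forward direction of the correctness argument, but it need not be (and in the paper is not) a constraint that pins dummy colors; the dummies' colors are simply \emph{chosen} to alternate when translating a satisfying assignment into a coloring.
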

\begin{proof}
We show a reduction from NAE 3SAT (not-all-equal 3SAT) which is the
following $NP$-complete problem~\cite{GareyJohnson-76}:\\
{\bf {\em  Input}:} A 3-CNF Boolean formula $\Phi$.\\
{\bf {\em  Question}:} Is there a $NAE$ assignment to $\Phi$? An
assignment is called $NAE$ if every clause has at least one
literal assigned True and at least one literal assigned False.\medskip

We first transform $\Phi$ into another instance $\Phi'$ in which
all variables are non-negated i.e., we make the instance monotone; this part of the proof is
folklore (see e.g., \cite{DBLP:conf/stoc/Schaefer78} for similar techniques). We then show how to realize $\Phi'$ using three permutations $\pi _1, \pi _2, \pi _3 $.

To transform $\Phi$ into $\Phi'$, for each variable $x$, we first replace the $i$th
occurrence of $x$ in its positive form by a variable
$x_i$, and the $i$th occurrence of $x$ in its negative form by
$x'_i$. The index $i$ varies between 1 and the number of
occurrences of each form (the maximum of the two). We also add the
following {\em consistency-clauses}, for each variable $x$ and for all $i$:
\begin{eqnarray*}
 \left(Z^x_i,x_i,x'_i      \right),
 \left(x_i,x'_i,Z^x_{i+1}  \right),
 \left(x_i,Z'^x_{i},x'_i \right)&,&
 \left(Z^x_{i},Z'^x_{i},Z^x_{i+1} \right)\\
 \left(x'_i, Z^x_{i+1}, x_{i+1} \right),
 \left(Z'^x_{i}, x'_i, x_{i+1} \right),
 \left(x'_i, x_{i+1}, Z'^x_{i+1},  \right)&,&
 \left(Z'^x_{i},Z^x_{i+1},Z'^x_{i+1} \right)
\end{eqnarray*}
where $Z_i^x$ and $Z'^x_i$ are new variables. This completes the
construction of $\Phi'$. Note that $\Phi'$ is monotone, as every negated variable has been replaced.

Moreover, $\Phi'$ has a $NAE$ assignment if and only if $\Phi$ has
a $NAE$ assignment. To see this, note that a $NAE$ assignment for
$\Phi$ can be translated to a $NAE$ assignment to $\Phi'$ as
follows: for every variable $x$ of $\Phi$ and every $i$, set
$x_i\equiv x$,
\hskip 0.07in $x'_i \equiv \overline{x}$,
\hskip 0.07in $Z_i^x \equiv True,
\hskip 0.07in Z'^x_i \equiv False$.

On the other hand, if $\Phi'$ has a $NAE$ assignment, then, by the
consistency clauses,
the variables in $\Phi'$ corresponding to any variable $x$ of $\Phi$ are assigned a consistent value.
Namely, for every $i,j$ we have $x_i = x_j$ and $x_i \neq
x'_i$. This assignment naturally translates to a $NAE$ assignment
for $\Phi$, by setting $x \equiv x_1$.\medskip

We next show how to realize $\Phi'$ by a set $S$ and three
permutations $\pi_1,\pi_2,\pi_3$. The elements of the set $S$ are
the variables of $\Phi'$, together with some additional elements
that are described below. Permutation $\pi_1$ realizes the clauses of $\Phi'$
corresponding to the original clauses of $\Phi$, while $\pi_2$ and
$\pi_3$ realize the consistency clauses of $\Phi'$.

The additional elements in $S$ are clause elements (two elements $c_{2j-1}$ and $c_{2j}$ for every clause $j$
of $\Phi$)
and \emph{dummy elements} $\star$ (the dummy elements are not indexed for the ease of presentation, but they appear in the same order in all three permutations).

Permutation $\pi_1$ encodes the clauses of $\Phi'$
corresponding to original clauses of $\Phi$ as follows (note that
all these clauses involve different variables). For each such
clause $(u,v,w)$, permutation $\pi_1$ contains the following
sequence: $$ c_{2j{-}1}, u, v, w, c_{2j}, \star, \star $$

\noindent At the end of $\pi_1$, for every variable $x$ of $\Phi'$
we have the sequence:
$$ Z^x_1, Z'^x_1, Z^x_2, Z'^x_2\
Z^x_3, Z'^x_3, \ldots, \star, \star$$

\noindent Permutation $\pi_2$ contains, for every variable $x$ of $\Phi$, the sequences:\\
$$Z^x_1, x_1, x'_1, Z^x_2, x_2, x'_2, Z^x_3, x_3, x'_3, Z^x_4, \ldots \hskip 0.5cm \text{and} \hskip 0.5cm \star, \star, Z'^x_1,
\star, \star, Z'^x_2, \star, \star, Z'^x_3, \ldots, \star, \star$$
At the
end of $\pi_2$ we have the clause-elements and
remaining dummy elements:\\
\noindent $$ \hskip 0.5cm \star, \star, c_{1}, \star, \star,
c_{2}, \star, \star, c_{3}\ldots$$

\noindent Similarly, permutation $\pi_3$ contains, for every variable $x$ of $\Phi$, the sequences:\\
$$x_1, Z'^x_1, x'_1, x_2, Z'^x_2, x'_2, x_3, Z'^x_3, x'_3, \ldots
\hskip 0.5cm  \text{and} \hskip 0.5cm  \star, \star, Z^x_1,
\star, \star, Z^x_2, \star, \star, Z^x_3, \ldots \star, \star$$ and at the end of $\pi_3$
we have the clause-elements and
remaining dummy elements:\\
 $$\star, \star, c_{1}, \star, \star, c_{2}, \star,
\star, c_{3},\ldots$$

This completes the construction of $S$ and $\pi_1,\pi_2,\pi_3$.
Note that for every clause of $\Phi'$ (whether it is derived from
$\Phi$ or is a consistency clause), the elements corresponding
to its three variables appear in sequence in one of the three
permutations. Therefore, if there is a 2-coloring of $S$, such that
every 3 elements of $S$ that are consecutive according to one of
the permutations are not monochromatic, then there is a $NAE$
assignment to $\Phi'$: each variable of $\Phi'$ is assigned True
if its corresponding element is colored `1', and False otherwise.

For the other direction, consider a $NAE$ assignment for $\Phi'$.
Observe that there is always a solution where $Z^x_i$ and $Z'^x_i$
are assigned opposite values. Then assign color `1' to elements
corresponding to variables assigned with True, and
assign color `0' to elements corresponding to variables
assigned with False. For the clause elements $c_{2j{-}1}$ and
$c_{2j}$ appearing in the subsequence $c_{2j{-}1},\ u,\ v,\ w,\
c_{2j}$, assign to $c_{2j{-}1}$ the color opposite to $u$, and to
$c_{2j}$ the color opposite to $w$. Finally, assign  colors `0' and `1' to each pair of
consecutive dummy elements, respectively.
It can be verified that there is no monochromatic
consecutive triple in any permutation.
\end{proof}

\paragraph{Approximation.}
Note that the general minimization problem (find a $k$-coloring that
minimizes the number of required points) can be approximated using the constructive version of the Lov\'asz Local Lemma (as mentioned in Section \ref{uperB}). Since $k$ is a trivial lower bound for any problem instance, this guarantees an approximation factor of $O( log k + log d)$. In particular, there exists a constant factor approximation for the problem introduced in Theorem \ref{theoNPhard} (since it fixes $d=3$ and $k=2$). 


\section{Coloring strips}
\label{sec_dual}

In this section we prove that any finite set of
strips in $\Re^d$ can be $k$-colored so that
every ``deep'' point is polychromatic. For a given set of strips (or intervals, if $d=1$),
we say that a point is {\em $i$-deep} if it is contained in at
least $i$ of the strips. We begin with the following easy lemma:

\begin{lemma}\label{intervals}
Let $\cal I$ be a finite set of intervals. Then
for every $k$, \hskip 0.02in $\cal I$ can be $k$-colored  so that
every $k$-deep point is polychromatic, while any point
covered by fewer than $k$ intervals will be covered by distinct colors.
\end{lemma}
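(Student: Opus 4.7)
My plan is to proceed by induction on $n = |\mathcal{I}|$, peeling off at each step the interval $I^* \in \mathcal{I}$ with the smallest right endpoint $r^*$: apply the inductive hypothesis to $\mathcal{I}' = \mathcal{I} \setminus \{I^*\}$ to obtain a valid $k$-coloring $c'$, then extend $c'$ by choosing a suitable color for $I^*$. The base case $n = 0$ is vacuous.

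The crucial structural observation, which motivates the choice of $I^*$, is that since $r^*$ is minimum every $J \in \mathcal{I}'$ satisfies $r_J \geq r^*$. Hence for any $p$ inside $I^* = [l^*, r^*]$, the set of intervals of $\mathcal{I}'$ containing $p$ is simply $A(p) = \{J \in \mathcal{I}' : l_J \leq p\}$, which is non-decreasing as $p$ sweeps rightwards across $I^*$: intervals of $\mathcal{I}'$ can only enter the active set, never leave it. Let $p^*$ be the largest $p \in I^*$ with $|A(p)| \leq k - 1$, i.e., the largest $p$ at which the depth of $\mathcal{I}$ is at most $k$.

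If no such $p^*$ exists, the depth of $\mathcal{I}$ strictly exceeds $k$ throughout $I^*$, and the inductive hypothesis applied to $\mathcal{I}'$ already guarantees that $c'(A(p))$ contains all $k$ colors at every $p \in I^*$, so $I^*$ may be assigned any color. Otherwise, by monotonicity $A(p) \subseteq A(p^*)$ for every $p \in I^*$ at which the distinctness constraint applies, and by the inductive hypothesis the $|A(p^*)| \leq k - 1$ intervals of $A(p^*)$ carry $|A(p^*)|$ distinct colors under $c'$. Therefore at least one color of $\{1, \ldots, k\}$ is missing from $c'(A(p^*))$, and we assign such a missing color to $I^*$.

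Verification is routine. Points outside $I^*$ see the same active intervals (and the same restricted coloring) as in $\mathcal{I}'$, so the inductive hypothesis handles them. For $p \in I^*$ with $|A(p)| \leq k - 1$, the colors at $p$ become $c'(A(p)) \cup \{c(I^*)\}$, which by construction is a set of $|A(p)| + 1 \leq k$ distinct colors; this yields the required distinctness and, when $|A(p)| = k - 1$ so that the depth of $\mathcal{I}$ is exactly $k$, also polychromaticity. For $p \in I^*$ with $|A(p)| \geq k$, the inductive hypothesis on $\mathcal{I}'$ already guarantees that $c'(A(p))$ contains all $k$ colors, so polychromaticity is preserved regardless of $c(I^*)$. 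I expect the only genuinely non-routine step to be the decision to peel off the interval with smallest $r$: this is precisely what forces $A(p)$ to be monotone on $I^*$ and lets a single color choice serve simultaneously at every point of $I^*$; once this is in place, the rest is a straightforward case check.
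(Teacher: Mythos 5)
Your proof is correct and takes essentially the same approach as the paper: peel off the interval with the smallest right endpoint, apply the inductive hypothesis to the rest, and assign the peeled interval a color missing from the at most $k-1$ overlapping intervals with smallest left endpoints. Your version simply makes explicit the monotonicity of the active set $A(p)$ across $I^*$, which the paper compresses into ``sort the intersecting intervals by left endpoint and avoid the colors of the first $k-1$.''
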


\begin{proof}
We use induction on $\cardin{\cal I}$. Let $I$ be the interval
with the leftmost right endpoint. By induction, the intervals in
${\cal I} \setminus \{I\}$ can be $k$-colored with the desired
property. Sort the intervals intersecting $I$ according to their
left endpoints and let $I_1,\ldots,I_{k-1}$ be the first $k{-}1$
intervals in this order. It is easily seen that coloring $I$ with
a color distinct from the colors of those $k{-}1$ intervals
produces a coloring with the desired property, and hence a valid
coloring.
\end{proof}

\begin{theorem}
For any $d$ and $k$, one can $k$-color any set of axis-aligned strips
in $\Re^d$ so that every $d(k{-}1){+}1$-deep point
is polychromatic. That is,
$$\bar{p}(k,d)\le d(k{-}1)+1.$$
\end{theorem}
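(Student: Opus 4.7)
My plan is to reduce the $d$-dimensional strip coloring problem to $d$ independent $1$-dimensional interval coloring problems and apply Lemma~\ref{intervals} together with a pigeonhole argument.

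First, I would partition the given set of strips by orientation into $d$ groups $\mathcal{I}_1, \ldots, \mathcal{I}_d$, where $\mathcal{I}_i$ consists of all strips orthogonal to the $x_i$-axis. Since each strip in $\mathcal{I}_i$ is delimited by two hyperplanes orthogonal to the $x_i$-axis, it corresponds naturally to an interval on the $x_i$-axis, and a point $p \in \Re^d$ lies in such a strip if and only if the projection of $p$ onto the $x_i$-axis lies in the corresponding interval. Thus covering depth in $\mathcal{I}_i$ at the point $p$ equals interval covering depth at the projected point.

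Next, I apply Lemma~\ref{intervals} independently to each $\mathcal{I}_i$ with the same palette of $k$ colors, obtaining a $k$-coloring of each group with the property that every point covered by at least $k$ intervals of $\mathcal{I}_i$ is polychromatic with respect to $\mathcal{I}_i$. Combining these $d$ partial colorings yields a single $k$-coloring of the full set of strips.

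Now consider any point $p \in \Re^d$ that is $(d(k-1)+1)$-deep. Let $n_i$ be the number of strips of $\mathcal{I}_i$ covering $p$, so that $\sum_{i=1}^{d} n_i \geq d(k-1)+1$. By pigeonhole, there exists at least one index $i$ with $n_i \geq k$: otherwise every $n_i \leq k-1$ and the total would be at most $d(k-1)$, a contradiction. For that index $i$, the Lemma~\ref{intervals} guarantee applied to $\mathcal{I}_i$ implies that the $n_i \geq k$ strips of $\mathcal{I}_i$ containing $p$ already use all $k$ colors, so $p$ is polychromatic with respect to the full colored strip set.

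I do not foresee any significant obstacle: the only thing to be careful about is ensuring that Lemma~\ref{intervals} is stated strongly enough to give polychromaticity at any point covered by at least $k$ intervals (not just exactly $k$), which it is. The rest is just pigeonhole and the reduction from axis-aligned strips to intervals via projection.
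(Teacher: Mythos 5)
Your proof is correct and matches the paper's argument essentially step for step: apply Lemma~\ref{intervals} separately to each of the $d$ orientation classes, then use pigeonhole to find an orientation with at least $k$ covering strips at any $(d(k-1)+1)$-deep point.
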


\begin{proof}
We start by coloring the strips parallel
to any given axis $x_i$ ($i=1,\ldots,d$) separately using the
coloring described in Lemma~\ref{intervals}. We claim that this
procedure produces a valid polychromatic coloring for all
$d(k{-}1){+}1$-deep points. Indeed assume that a given point $s$ is
$d(k{-}1){+}1$-deep and let $H(s)$ be the set of strips covering $s$.
Since there are $d$ possible orientations for the strips in
$H(s)$, by the pigeonhole principle at least $k$ of the strips
in $H(s)$ are parallel to the same axis. 
Then by property of the coloring of Lemma~\ref{intervals}, $H(s)$ is polychromatic.
\end{proof}

The above proof is constructive. By sorting the intervals that
correspond to any of the given directions, one can easily find a coloring in $O(n \log n)$ time.

We now give a lower bound on $\bar{p}(k,d)$. For that, we define $2d$ strips as follows: strip $s_{2i}$ is defined as $0< x_i <2$ (where $x_i$ is the coordinate of $i$-th dimension). Analogously, we define strip $s_{2i+1}$ as $1<x_i<3$. The main property of these strips is that we can always find a point covered by any subset of the $2d$ strips:
\begin{lemma}\label{lem_select}
For any $I \subseteq \{1,\ldots,2d\}$, there exists a point $p_I$ such that $p_I \in s_i \Leftrightarrow i \in I$, $\forall i \leq 2d$.
\end{lemma}
\begin{proof}
Note that whether or not point $p_I$ is covered by strips $s_{2i}$ or $s_{2i+1}$ only depends in the $i$-th coordinate of $p_I$. Thus, we define the $i$-th coordinate of point $p_I$ as follows:
\begin{itemize}
\item $1$ if $2i \in I$ but $2i +1 \not\in I$
\item $2$ if both $2i,2i+1 \in I$
\item $3$ if both $2i,2i +1 \not\in I$
\end{itemize}
Since the choice is independent on each dimension, the construction of $p_I$ is valid and is only covered by strips in $I$.
\end{proof}

We use these strips to find a lower bound on $\bar{p}(k,d)$ as follows:

\begin{theorem}\label{theo_lowbarp}
For any fixed dimension $d$ and integer $k$, it holds that
$$
\bar{p}(k,d) > \lfloor k/2 \rfloor d+1.
$$
\end{theorem}
\begin{proof}
Consider the $2d$ strips $\{s_i\}_{i \in [2d]}$ defined above.
We replace each strip $s_i$ with a cluster of $\lfloor k/2 \rfloor$
overlapping strips $\{s_{i,j}\}_{j \in [\lfloor k/2 \rfloor]}$, so that
a point $p_I$ defined in Lemma 3 is in strip $s_{i,j}$ if and only if
it is in $s_i$. This can be obtained, say, by perturbing a boundary of
the strip around $x_i=0$ (or around $x_i=3$).


Consider any coloring of the above problem instance with at most $\lfloor k/2 \rfloor d$ colors. As in the proof of Theorem \ref{thm_lb}, 
we can use the pigeonhole principle and the handshake lemma to conclude that there is at a color that is missing in at least $\lceil k/2 \rceil 2d/k \geq d$ clusters. Let $I$ be the set of at least $d$ indices whose clusters are missing the same color. By Lemma \ref{lem_select}, point $p_I$ is covered only by the strips in clusters $C_i$, for all $i \in I$. In particular, $p_I$ is at least $\lfloor k/2 \rfloor d$-deep and not colorful.
\end{proof}

\section*{acknowledgements}
This research was initiated during the WAFOL'09 workshop at Universit\'e Libre de {Bruxelles}~(U.L.B.), Brussels, Belgium.
The authors want to thank all other participants, and in particular Erik D. Demaine, for helpful discussions. Also, the authors would like to thank the anonymous referees for pointing out useful references.

\bibliographystyle{spmpsci}      
\bibliography{axis_parallel_strips}   

\begin{thebibliography}{10}
\providecommand{\url}[1]{{#1}}
\providecommand{\urlprefix}{URL }
\expandafter\ifx\csname urlstyle\endcsname\relax
  \providecommand{\doi}[1]{DOI~\discretionary{}{}{}#1}\else
  \providecommand{\doi}{DOI~\discretionary{}{}{}\begingroup
  \urlstyle{rm}\Url}\fi

\bibitem{Alon03}
Alon, N.: A simple algorithm for edge-coloring bipartite multigraphs.
\newblock Inf. Proc. Lett. \textbf{85}(6), 301--302 (2003).
\newblock \doi{http://dx.doi.org/10.1016/S0020-0190(02)00446-5}

\bibitem{AS00}
Alon, N., Spencer, J.: The Probabilistic Method, 2nd edition.
\newblock John Wiley (2000)

\bibitem{Aloup2}
Aloupis, G., Cardinal, J., Collette, S., Langerman, S., Orden, D., Ramos, P.:
  Decomposition of multiple coverings into more parts.
\newblock In: Proceedings of the ACM-SIAM Symposium on Discrete Algorithms
  (SODA'09) (2009)

\bibitem{Aloup1}
Aloupis, G., Cardinal, J., Collette, S., Langerman, S., Smorodinsky, S.:
  Coloring geometric range spaces.
\newblock Discrete {\&} Computational Geometry \textbf{41}(2), 348--362 (2009)

\bibitem{Alspach08}
B.Alspach: The wonderful walecki construction.
\newblock Bull. Inst. Combin. Appl. \textbf{52}, 7--20 (2008)

\bibitem{othersensors}
Buchsbaum, A., Efrat, A., Jain, S., Venkatasubramanian, S., Yi, K.: Restricted
  strip covering and the sensor cover problem.
\newblock In: ACM-SIAM Symposium on Discrete Algorithms (SODA'07) (2007)

\bibitem{ColeOstSchirra01}
Cole, R., Ost, K., Schirra, S.: Edge-coloring bipartite multigraphs in {$O(E
  \log D)$} time.
\newblock Combinatorica \textbf{21}(1), 5--12 (2001)

\bibitem{shakharcf}
Even, G., Lotker, Z., Ron, D., Smorodinsky, S.: Conflict-free colorings of
  simple geometric regions with applications to frequency assignment in
  cellular networks.
\newblock {SIAM} Journal on Computing \textbf{33}(1), 94--136 (2004)

\bibitem{GareyJohnson-76}
Garey, M.R., Johnson, D.S.: Computers and Intractability : A Guide to the
  Theory of NP-Completeness.
\newblock {W. H. Freeman} (1979)

\bibitem{Gib-Vara09}
Gibson, M., Varadarajan, K.R.: Decomposing coverings and the planar sensor
  cover problem.
\newblock In: FOCS, pp. 159--168. IEEE Computer Society (2009)

\bibitem{HST}
Haxell, P., Szab\'{o}, T., Tardos, G.: Bounded size components: partitions and
  transversals.
\newblock J. Comb. Theory Ser. B \textbf{88}(2), 281--297 (2003).
\newblock \doi{http://dx.doi.org/10.1016/S0095-8956(03)00031-5}

\bibitem{decompball}
Mani, P., Pach, J.: Decomposition problems for multiple coverings with unit
  balls (1986).
\newblock Manuscript

\bibitem{1536462}
Moser, R.A., Tardos, G.: A constructive proof of the general lov{\'a}sz local
  lemma.
\newblock J. ACM \textbf{57}(2) (2010)

\bibitem{Pa80}
Pach, J.: Decomposition of multiple packing and covering.
\newblock In: 2. Kolloq. {\"u}ber Diskrete Geom., pp. 169--178. Inst. Math.
  Univ. Salzburg (1980)

\bibitem{PDMPC80}
Pach, J.: Decomposition of multiple packing and covering.
\newblock In: 2. Kolloquium Uber {D}iskrete {G}eometrie, pp. 169--178. Inst.
  Math. Univ. Salzburg, Salzburg (1980)

\bibitem{Pach86}
Pach, J.: Covering the plane with convex polygons.
\newblock Discrete {\&} Computational Geometry \textbf{1}, 73--81 (1986)

\bibitem{pachindecomp}
Pach, J., Tardos, G., T\'oth, G.: Indecomposable coverings.
\newblock In: The China--Japan Joint Conference on Discrete Geometry,
  Combinatorics and Graph Theory (CJCDGCGT 2005), Lecture Notes in Computer
  Science, pp. 135--148 (2007)

\bibitem{pachtoth}
Pach, J., T\'{o}th, G.: Decomposition of multiple coverings into many parts.
\newblock In: Proc. of the {ACM} Symposium on Computational Geometry, pp.
  133--137 (2007)

\bibitem{DBLP:conf/stoc/Schaefer78}
Schaefer, T.J.: The complexity of satisfiability problems.
\newblock In: STOC, pp. 216--226. ACM (1978)

\bibitem{Sm07}
Smorodinsky, S.: On the chromatic number of some geometric hypergraphs.
\newblock {SIAM} Journal on Discrete Mathematics \textbf{21}(3), 676--687
  (2007)

\bibitem{decompEven}
Stanton, R.G., Cowan, D.D., James, L.O.: Some results on path numbers.
\newblock In: Louisiana Conference on Combin., Graph Theory and Computing
  (1970)

\bibitem{TT07}
Tardos, G., T\'oth, G.: Multiple coverings of the plane with triangles.
\newblock Discrete {\&} Computational Geometry \textbf{38}(2), 443--450 (2007)

\end{thebibliography}

%
%

\end{document}